\DeclareMathOperator{\AGL}{AGL}
\DeclareMathOperator{\GL}{GL}
\DeclareMathOperator{\Sym}{Sym}
\DeclareMathOperator{\F}{\mathbb{F}}
\newcommand\deq{\mathrel{\stackrel{\makebox[0pt]{\mbox{\normalfont\tiny def}}}{=}}}
\newcommand{\Span}[1]{\left\langle#1\right\rangle}
\let\phi\varphi
\theoremstyle{plain}
\newtheorem{theorem}{Theorem}[section]
\newtheorem{lemma}[theorem]{Lemma}
\newtheorem{corollary}[theorem]{Corollary}
\theoremstyle{remark}
\theoremstyle{definition}
\newtheorem{definition}[theorem]{Definition}
\newtheorem*{notation*}{Notation}
\lstdefinelanguage{GAP}{%
 morekeywords={%
 Assert,Info,IsBound,QUIT,%
 TryNextMethod,Unbind,and,break,%
 continue,do,elif,%
 else,end,false,fi,for,%
 function,if,in,local,%
 mod,not,od,or,%
 quit,rec,repeat,return,%
 then,true,until,while%
 },%
 sensitive,%
 morecomment=[l]\#,%
 morestring=[b]",%
 morestring=[b]',%
}[keywords,comments,strings]
\DeclareMathAlphabet{\mbb}{U}{bbold}{m}{n}
\let \varphi \phi
\let \theta \pi
\newcommand{\veps}{\varepsilon}
\newcommand{\rb}{\overline{\rho}}
\newcommand{\tb}{\overline\theta}
\newcommand{\M}{M}
\newcommand{\rhoLM}{
\begin{pmatrix}
\mbb 1 & \mbb1\\
\mbb1 & \mbb0
\end{pmatrix} 
\begin{pmatrix}
\mbb 1& \mbb1+\rho\\
\mbb 0 & \mbb1
\end{pmatrix}
}
\newcommand{\rhoinvLM}{
\begin{pmatrix}
\mbb1+\rho & \rho\\
\mbb1 & \mbb1
\end{pmatrix}
}
\newcommand{\thetainvLM}
{
\begin{pmatrix}
\theta^{-1} & \mbb0\\
\mbb1 & \mbb1
\end{pmatrix} 
}
\begin{document}
\title[On the primitivity of Lai-Massey schemes]{On the primitivity of  Lai-Massey schemes} 
 \author[R.~Aragona]{Riccardo Aragona}
\author[R.~Civino]{Roberto Civino}

\address{DISIM \\
 Universit\`a degli Studi dell'Aquila\\
 via Vetoio\\
 67100 Coppito (AQ)\\
 Italy}       

\email{riccardo.aragona@univaq.it}
\email{roberto.civino@univaq.it} 

\date{} \thanks{All the authors are members of INdAM-GNSAGA
 (Italy). This work was partially supported by the Centre of EXcellence on Connected, Geo-Localized and 
 Cybersecure Vehicles (EX-Emerge), funded by Italian Government under CIPE resolution n. 70/2017 (Aug. 7, 2017).}

\subjclass[2010]{20B15, 20B35, 94A60} 
\keywords{Iterated block ciphers, Lai-Massey scheme, group generated by the round functions, primitive groups, security reduction}

\begin{abstract}
In symmetric cryptography, the round functions used as building blocks for iterated block ciphers
are often obtained as the composition of different layers providing confusion and diffusion.
The study of the conditions on such layers which make the group generated by the round functions of a block cipher
 a primitive group  has been addressed in the past 
years, both in the case of Substitution Permutation Networks and Feistel Networks, 
giving to block cipher designers the receipt to avoid the imprimitivity attack. In this paper a similar study is 
proposed on the subject of the Lai-Massey scheme, a framework which combines both Substitution Permutation Network
and Feistel Network features. 
Its resistance to the imprimitivity attack is obtained as a consequence of a more general
result in which the problem of proving the primitivity of the Lai-Massey scheme is reduced to the simpler one of proving 
the primitivity of
the group generated by the round functions of a strictly related Substitution Permutation Network.
\end{abstract}


\maketitle


\section{Introduction}
Until the selection of the Advanced Encryption Standard~\cite{aes}, Feistel Networks (FN) have probably been the most popular 
design framework for iterated block ciphers, whereas today they share the stage with Substitution Permutation Networks (SPN).
Feistel Networks are characterized by the clever idea of splitting the message into two halves, say left part and right part, and applying in each round a key-dependent non-linear transformation called F-function
to the right part, which is successively mixed with the left part, just before the two halves are swapped~\cite{feistel}.
As a notable feature, FNs do not require the F-function to be invertible in order to perform decryption. The framework of SPNs is instead composed 
by a sequence of carefully designed key-dependent round functions composed by confusion and diffusion invertible layers acting on the whole block.
If, on the one hand, SPNs' minimalistic design allows a simple description and consequently a more careful security assessment, 
on the other, the structure of FNs gives the designers more freedom in the choice of the layers intervening during the encryption, 
although keeping confusion and diffusion confined only in half of the block in a single round. The Lai-Massey scheme (LM)~\cite{vaudenay1999lai}, introduced after the design of IDEA~\cite{lai1990proposal}, perfectly 
combines the advantages of both frameworks, splitting the message into two halves but mixing the left and right part of the state and 
consequently accelerating both diffusion and confusion.
Its pseudo-randomness behavior, its security against impossible differential cryptanalysis and other generic attacks has been addressed in recent years~\cite{quasifeistel,impossibledifferentialLM,randomlaimassey,genericattackLM}.\\

In this paper we focus on the study of a group containing the group generated by the round functions of a general Lai-Massey cipher, proving its resistance against the \emph{imprimitivity 
attack}~\cite{paterson1999imprimitive} provided that its inner layers satisfy certain well-established conditions. 
Little is known, indeed, on the group-theoretical security of such a design strategy, whereas the one of SPNs and FNs has been addressed in several works in the last decades. One remarkable  exception is a paper due to Wernsdorf~\cite{wenidea} which shows that the multiply-addition box at the center of the round of IDEA generates the alternating group on $\F_2^{32}$ and where it is conjectured that also the entire rounds of IDEA generate the alternating group.

The topic of our research, i.e.\ the group generated by the round functions, was first defined in 1975 by Coppersmith and Grossman~\cite{copgross} and gained more popularity
  when, in 1999, Paterson introduced the imprimitivity attack showing that in a DES-like cipher may exist a partition of the message space which is invariant under the action of the group $\Gamma$ generated by the encryption functions, i.e.\ a block system for $\Gamma$, whose knowledge can be exploited to attack the cipher. After this, the resistance of many known ciphers 
to this attack has been proved~\cite{sparwenrij,carantiprimitive,wenkas,aragona2017group,aragona2018primitivity}.
In Aragona et al.~\cite[Theorem 4.5]{aragonawave}, the authors showed that the primitivity of the group generated by the rounds of an FN can be \emph{reduced} to the primitivity of the group generated by the rounds of an SPN whose round functions are the ones implemented as F-functions within each round of the FN, proving, in fact, that the primitivity of structure of an FN, in spite of its complexity, 
can be inherited from a simpler design.
We prove here, using a similar approach, that the primitivity of the group generated by the rounds of an SPN implies the one of a group containing the group generated by the rounds of an LM which features in its structure the same key-dependent transformation acting in the SPN. Our result is referred to the closest group containing the actual group generated by the round functions of an LM for which a convenient algebraic description of the generators can be provided.
\subsubsection*{Organization of the paper}
In Section~\ref{sec:pre} we introduce the notation and the preliminary results, and  present our algebraic model 
 of Lai-Massey scheme which is the subject of the study. In Section~\ref{sec:pri} we prove the primitivity 
 reduction from the LM to the SPN.


\section{Group-theoretical cryptanalysis}\label{sec:pre}
\subsection{Preliminaries}
Let us introduce our notation and some preliminary results.
\subsubsection{Spaces}
Let $n$ be a non-negative integer and $V \deq \F_2^n$ be the $n$-dimensional vector space over $\F_2$. We denote by $\Sym(V)$ the symmetric group acting on $V$ and by $\mbb1$ its identity. The map $\mbb 0 : V \rightarrow V$ denotes the null function. The group of the translations on V, i.e.\ the group of the maps $\sigma_v: V \rightarrow V$, such that $x\sigma_v= x+v$, is denoted by $T_n$, whereas the group of translations on $V \times V$ is denoted by $T_{2n}$, where the translation $\sigma_{(v,w)}$ acts on $(x,y)$ as 
$(x,y)\sigma_{(v,w)} = (x+v, y+w)$. Let us also denote by $\AGL(V)$ the group of all affine permutations of $V$ and by $\GL(V)$ the group of the linear ones.
\subsubsection{Groups}
Let $G$ be a finite group acting on a set $\M$. For each $g \in G$ and $v \in \M$ we denote the action of $g$ on $v$ as $vg$. 
The group $G$ is said to be \emph{transitive} on $\M$ if for each $v,w \in \M$ there exists $g \in G$ such that $vg=w$.
A partition $\mathcal{B}$ of $\M$ is \emph{trivial} if $\mathcal{B}=\{\M\}$ or $\mathcal{B}=\{\{v\} \mid v \in \M\}$, and \emph{$G$-invariant} if for any $B \in \mathcal{B}$ and $g \in G$ it holds $Bg \in \mathcal{B}$. Any non-trivial and $G$-invariant partition $\mathcal{B}$ of $\M$ is called a \emph{block system} for $G$. In particular any $B \in \mathcal{B}$ is called an \emph{imprimitivity block}. The group $G$ is \emph{primitive} in its action on $\M$ (or $G$ \emph{acts primitively} on $\M$) if {$G$ is transitive and} there exists no block system. 
Otherwise,  the group $G$ is \emph{imprimitive} in its action on $\M$ (or $G$ \emph{acts imprimitively} on $\M$). We recall here some well-known results that will be useful in the remainder of this paper~\cite{cameronpermutation}. 

\begin{lemma}\label{lemma:trans}
  If $T \leq G$ is transitive, then a block system
  for $G$ is also a block system for $T$.
\end{lemma}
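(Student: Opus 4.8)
The plan is to unwind the definition of a block system and verify that each of its defining clauses persists when the acting group is shrunk from $G$ to the transitive subgroup $T$. So let $\mathcal{B}$ be a block system for $G$ on $\M$: by definition $\mathcal{B}$ is a partition of $\M$ that is \emph{non-trivial} and \emph{$G$-invariant}. There are exactly two things to check for $\mathcal{B}$ to qualify as a block system for $T$, namely $T$-invariance and non-triviality, and I would treat them in that order.

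For $T$-invariance, the point is simply that $T \leq G$: given $B \in \mathcal{B}$ and $t \in T$, we have $t \in G$, so $Bt \in \mathcal{B}$ by $G$-invariance of $\mathcal{B}$; hence $\mathcal{B}$ is $T$-invariant. For non-triviality, observe that the conditions $\mathcal{B} \neq \{\M\}$ and $\mathcal{B} \neq \{\{v\} \mid v \in \M\}$ are properties of the partition $\mathcal{B}$ alone and make no reference to the acting group, so they are inherited verbatim. Together these two observations show that $\mathcal{B}$ is a non-trivial, $T$-invariant partition of $\M$, i.e.\ a block system for $T$, which is the claim.

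I do not expect a genuine obstacle: the argument amounts to restricting the quantifier over $G$ to $T$, together with the remark that ``non-trivial partition'' is a group-independent notion. The only point that deserves a moment's attention is the role of the transitivity hypothesis. With the present (invariant-partition) definition of block system it is not strictly needed for the implication as phrased; rather, it guarantees that ``block system for $T$'' is the standard notion (blocks all of the same size, obtained from one another under the action of $T$), and, crucially, it is exactly what is needed for the intended consequence of the lemma, namely that primitivity of the transitive subgroup $T$ forces primitivity of any overgroup $G$ containing it — obtained by contraposition, since $G$ is automatically transitive once it contains a transitive subgroup.
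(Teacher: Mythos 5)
Your proof is correct: with the paper's definition of a block system as a non-trivial $G$-invariant partition, the claim reduces exactly to the two observations you make (invariance restricts to any subgroup, and non-triviality is a property of the partition alone), and your remark about the role of transitivity is also accurate. The paper states this lemma without proof as a recalled standard fact, so there is no argument to compare against, but yours is the standard one.
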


\begin{lemma}\label{translatioBlocks}
Let $\M$ be a finite vector space over $\mathbb F_2$ and $T$ its translation group. 
Then $T$ is transitive and imprimitive on $\M$. A block system $\mathcal U$ for $T$ is composed by
  the cosets  of a non-trivial and proper subgroup $ U < (M,+)$, i.e. 
  \begin{equation*}
 \mathcal U =    \{ 
     U + v
      \mid
 	v \in    M
      \}.
        \end{equation*}
  \end{lemma}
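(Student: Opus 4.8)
The plan is to establish the two claims in turn: transitivity, which is immediate, and the structural description of the block systems, from which imprimitivity drops out. For transitivity, given $v,w\in\M$ the translation $\sigma_{v+w}\in T$ satisfies $v\sigma_{v+w}=v+(v+w)=w$ (we are over $\mathbb F_2$), so $T$ is transitive on $\M$.

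For the description of block systems, let $\mathcal U$ be a block system for $T$ and let $U\in\mathcal U$ be the imprimitivity block containing $0$. The first step is to show that $U$ is a subgroup of $(\M,+)$. For any $u\in U$ the translation $\sigma_u$ maps $0\in U$ to $u\in U$, so the blocks $U\sigma_u$ and $U$ are not disjoint; since any two blocks of a block system are either equal or disjoint, $U\sigma_u=U$, i.e.\ $U+u=U$. Hence for $u,u'\in U$ we have $u+u'\in U+u=U$; together with $0\in U$ and the fact that over $\mathbb F_2$ every element of $\M$ is its own inverse, this shows $U$ is a subgroup, and it is non-trivial ($U\neq\{0\}$) and proper ($U\neq\M$) precisely because $\mathcal U$ is a non-trivial partition. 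The second step is to identify the remaining blocks: for an arbitrary $B\in\mathcal U$, choosing $v\in B$ we get that $U\sigma_v=U+v$ is a block (by $T$-invariance) which contains $v$, hence equals $B$; conversely, each coset $U+v=U\sigma_v$ belongs to $\mathcal U$. Therefore $\mathcal U=\{U+v\mid v\in\M\}$, as asserted.

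It then remains to observe that such block systems exist, which is what yields imprimitivity: if $U<(\M,+)$ is any proper non-trivial subgroup --- for instance a $1$-dimensional subspace, using here that $\dim_{\mathbb F_2}\M\geq 2$ --- then its cosets form a non-trivial partition of $\M$ that is $T$-invariant, since $(U+v)\sigma_w=U+(v+w)$ is again a coset of $U$. The only point requiring any care is the repeated appeal to the equal-or-disjoint dichotomy for blocks, which is the standard basic fact about block systems recalled from~\cite{cameronpermutation}; the subgroup verification is even lighter than over a general field, because over $\mathbb F_2$ closure under addition already forces closure under inverses, and the abelianity of $T$ makes the invariance of the cosets transparent.
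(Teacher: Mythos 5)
Your proof is correct and is the standard argument; the paper itself gives no proof of this lemma, simply recalling it as a well-known fact from~\cite{cameronpermutation}, so there is nothing to diverge from. Both halves of your argument --- identifying the block through $0$ as a subgroup via the equal-or-disjoint dichotomy, and recognising the remaining blocks as its cosets --- are exactly what one would expect, and you are right to flag that imprimitivity itself silently requires $\dim_{\mathbb F_2}\M\geq 2$ (true in the cryptographic setting of the paper, but worth the remark).
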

  \subsubsection{Goursat's Lemma}
To prove our results, we need to determine a block system for $V \times V$.
In order to do so, we use the following 
characterization of subgroups of the direct product  of two groups in terms of
suitable sections of the direct factors~\cite{goursat}.  

\begin{theorem}\label{gours}
  Let $G_1$  and $G_2$ be two  groups. There
  exists  a bijection between  
  \begin{enumerate}
  \item 
    the set  of all subgroups  of the 
    direct  product  $G_1\times   G_2$,  and  
  \item 
    the  set   of  all  triples
    $(A/B,C/D,\psi )$, where 
    \begin{itemize}
    \item $A$ is a subgroup of $G_{1}$;
    \item $C$ is a subgroup of $G_{2}$;
    \item $B$ is a normal subgroup of $A$;
    \item $D$ is a normal subgroup of $C$; 
    \item $\psi: A/B\to C/D$ is a group isomorphism.
    \end{itemize}
  \end{enumerate}
  
\noindent Then, each subgroup of $G_1\times G_2$ can be uniquely
  written as
  \begin{equation*}
    U_{\psi}= \{
      (a,c) \in A \times C 
      :
      (a + B) \psi =c + D
      \}.
  \end{equation*}
\end{theorem}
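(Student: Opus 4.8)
The statement to be proved is Goursat's Lemma, and the plan is to write down the bijection explicitly in both directions and then check that the two assignments are mutually inverse. Throughout I would write the operations of $G_1$ and $G_2$ additively, following the statement (with no abelianness assumed), denote by $0$ the identities, and let $\pi_1 : G_1\times G_2\to G_1$ and $\pi_2 : G_1\times G_2\to G_2$ be the two coordinate projections. \emph{From a subgroup to a triple:} given $U\leq G_1\times G_2$ I would set $A\deq\pi_1(U)$, $C\deq\pi_2(U)$, and
\[
B\deq\Set{a\in A : (a,0)\in U},\qquad D\deq\Set{c\in C : (0,c)\in U},
\]
and first verify that $B\isnorm A$ and $D\isnorm C$: for $a\in A$ pick $c$ with $(a,c)\in U$; then for $b\in B$ the conjugate $-(a,c)+(b,0)+(a,c)=(-a+b+a,0)$ lies in $U$, so $-a+b+a\in B$, and symmetrically for $D$.

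Next I would define $\psi : A/B\to C/D$ by sending $a+B$ to $c+D$, where $c$ is any element with $(a,c)\in U$. The one genuinely delicate point is that this is well defined: if $(a,c),(a,c')\in U$ then $(0,-c+c')\in U$, so $c+D=c'+D$; and if $a+B=a'+B$, say $a'=a+b$ with $b\in B$, then $(a',c)=(a,c)+(b,0)\in U$, so $a'+B$ is sent to the same coset. That $\psi$ is a homomorphism is immediate from the group operation of $U$, surjectivity is immediate from the definition, and injectivity holds because $(a+B)\psi=0+D$ forces the chosen $c$ to lie in $D$, whence $(a,0)=(a,c)-(0,c)\in U$ and $a\in B$. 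This produces the required triple $(A/B,C/D,\psi)$.

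\emph{From a triple to a subgroup:} given $(A/B,C/D,\psi)$ I would take $U_\psi$ as in the displayed formula of the statement and check it is a subgroup of $A\times C$, hence of $G_1\times G_2$: from $(a+B)\psi=c+D$ and $(a'+B)\psi=c'+D$ the homomorphism property of $\psi$ gives $((a+a')+B)\psi=(c+c')+D$ and $((-a)+B)\psi=(-c)+D$, so $U_\psi$ is closed under the operation and under inverses and contains $(0,0)$. It then remains to run the two round trips. Starting from $U$: the inclusion $U\subseteq U_\psi$ is built into the definition of $\psi$; conversely, if $(a,c)\in U_\psi$ choose $c_0$ with $(a,c_0)\in U$, so $c_0+D=(a+B)\psi=c+D$, hence $-c_0+c\in D$, hence $(0,-c_0+c)\in U$, hence $(a,c)=(a,c_0)+(0,-c_0+c)\in U$; thus $U_\psi=U$. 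Starting from a triple: the definition of $\psi$ gives $\pi_1(U_\psi)=A$ and its surjectivity gives $\pi_2(U_\psi)=C$; the subgroup recovered as $\Set{a\in A : (a,0)\in U_\psi}$ equals $\Set{a\in A : (a+B)\psi=D}$, which is $B$ because $\psi$ is injective, and likewise $D$ is recovered; and the recovered isomorphism sends $a+B$ to the coset $c+D$ with $(a,c)\in U_\psi$, i.e.\ with $(a+B)\psi=c+D$, so it is $\psi$ itself. The uniqueness of the expression $U_\psi$ for a given subgroup is then just a restatement of the fact that these two constructions are inverse to one another.

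The only step I expect to need real care is the well-definedness of $\psi$ together with the bookkeeping in the first round trip — specifically the observation that $D$ is by construction exactly the set of second coordinates of the elements of $U$ whose first coordinate is trivial, which is precisely what lets one pass from $c_0+D=c+D$ to $(0,-c_0+c)\in U$. Everything else is routine group theory, and no feature particular to $\FF$ or to the ambient cipher structure enters the argument.
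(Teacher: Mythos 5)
Your proof is correct and complete: the two explicit assignments (subgroup $\mapsto$ triple via the projections and the fibres over the identities, and triple $\mapsto U_\psi$) are exactly the canonical proof of Goursat's Lemma, and you verify the delicate points (well-definedness of $\psi$, normality of $B$ and $D$, and both round trips) without gaps. The paper itself offers no proof to compare against --- it states the result as a classical theorem with a citation to Goursat --- so your standard argument is precisely what is implicitly being invoked.
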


Note that the isomorphism $\psi: A/B\to C/D$ is induced by a homomorphism $\varphi: A \to C$ such that $(a+B)\psi=a\varphi + D$ for any $a\in A$, and $B\varphi\leq D$. Such homomorphism is not unique. 
\begin{corollary}[\cite{aragonawave}]\label{lemma:psiforphi}
  Using notation of Theorem~\ref{gours}, given any homomorphism $\phi$ inducing $\psi$, we have 
 \begin{equation}\label{eq:upsi}
    U_{\psi}
    =
    \{
      (a, a \varphi + d)
      \mid
      a \in A, d \in D
      \}.
  \end{equation}
\end{corollary}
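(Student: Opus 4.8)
The plan is to prove the set equality~\eqref{eq:upsi} by a direct double inclusion, using only the explicit description of $U_\psi$ furnished by Goursat's Lemma (Theorem~\ref{gours}) together with the defining property of an inducing homomorphism.

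First I would unwind the definitions: by Theorem~\ref{gours} we have $U_\psi = \Set{(a,c) \in A \times C : (a+B)\psi = c+D}$, and by assumption $\varphi\colon A \to C$ induces $\psi$, which means exactly that $(a+B)\psi = a\varphi + D$ for every $a \in A$ (with $B\varphi \leq D$, so that this prescription is well defined). Substituting the second identity into the first rewrites the membership condition ``$(a,c) \in U_\psi$'' as ``$a \in A$, $c \in C$, and $c + D = a\varphi + D$''.

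For the inclusion $\{(a, a\varphi + d) \mid a \in A,\ d \in D\} \subseteq U_\psi$, I would fix $a \in A$ and $d \in D$ and check the two conditions above: since $\varphi$ maps $A$ into $C$ and $D \leq C$, the element $a\varphi + d$ lies in $C$; and since $d \in D$ we get $(a\varphi + d) + D = a\varphi + D$, so the coset condition holds and $(a, a\varphi + d) \in U_\psi$. For the reverse inclusion I would take $(a,c) \in U_\psi$, so that $c + D = a\varphi + D$; this says precisely that $c \in a\varphi + D$, hence $c = a\varphi + d$ for some $d \in D$, and therefore $(a,c) = (a, a\varphi + d)$ lies in the right-hand side. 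Combining the two inclusions yields~\eqref{eq:upsi}.

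There is no genuine obstacle here — the argument is essentially bookkeeping — but two small points deserve attention: the additive notation is being used for groups that need not be abelian, so the coset manipulations above should be justified by the normality of $D$ in $C$ rather than by commutativity; and in the ``easy'' inclusion one must not omit the verification that $a\varphi + d$ actually lies in $C$, since this is exactly what guarantees that $(a, a\varphi + d)$ is an element of $A \times C$ to begin with. (As a byproduct one even sees that the map $(a,d) \mapsto (a, a\varphi + d)$ is a bijection $A \times D \to U_\psi$, though only the set equality is needed in the sequel.)
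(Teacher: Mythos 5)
Your argument is correct: the paper itself gives no proof of this corollary (it is quoted from the cited reference), and your double-inclusion unwinding of Goursat's description $U_\psi = \{(a,c) \in A \times C : (a+B)\psi = c+D\}$ via the identity $(a+B)\psi = a\varphi + D$ is exactly the standard argument one would write. Your side remarks --- that the coset manipulations rest on normality of $D$ in $C$ rather than commutativity, and that one must check $a\varphi + d \in C$ --- are sound and cost nothing.
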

\subsubsection{Ciphers}
A \emph{block cipher} $\Phi$ is a family of key-dependent permutations 
\[\{E_K \mid E_K: \ M \rightarrow \M, \, K \in \mathcal K \},\]
 where $\M$ is the message space, $\mathcal K$ the key space, and $|\M|\leq |\mathcal K|$. The permutation $E_K$ is called the \emph{encryption function induced by the master key} $K$. The block cipher $\Phi$ is called an {iterated block cipher} if there exists $r \in \mathbb N$ such that for each $K \in \mathcal K$ the encryption function $E_K$ is obtained as the composition of $r$ {round functions}, i.e. $E_K = \veps_{1,K}\,\veps_{2,K}\ldots\veps_{r,K}$. To provide efficiency, each round function is the composition of a public component provided by the designers, and a private component derived from the user-provided key by means of a public procedure known as \emph{key-schedule}. The group
\begin{equation*}\label{def:gr}
	\Gamma_{\infty}(\Phi)\deq\langle \veps_{i,K} \mid K \in \mathcal{K}, 1 \leq i \leq r\rangle,
\end{equation*}
called \emph{the group generated by the round functions} of $\Phi$, is studied to prevent 
group-theoretical attacks~\cite{kaliski,paterson1999imprimitive,calderini2017translation}.\\

An iterated {block cipher} $\Phi$ is called an $r$-round \emph{Substitution Permutation Network} (SPN) if $M = V$ and for each $1 \leq i \leq r$ we have 
 \[
 \veps_{i,K} \deq \rho\sigma_{k_i},
 \]
 where $\rho\in \Sym(V)\setminus \AGL(V)$ is designed to provided both Shannon's principle of confusion and diffusion~\cite{shannon}. 
If $\Phi$ is an SPN, then $\Gamma_{\infty}(\Phi) = \Span{\rho, T_n}$\cite{carantiprimitive}.
\subsection{A model for the Lai-Massey scheme}
We introduce here our algebraic description of the Lai-Massey scheme~\cite{lai1990proposal} as presented by Vaudenay~\cite{vaudenay1999lai}.
\begin{definition}\label{LMdef}
Let $r$ be a non-negative integer, $\rho \in \Sym(V)\setminus \AGL(V)$ and $\theta \in \GL(V)$.
An $r$-round Lai-Massey cipher $\mathrm{LM}(\rho,\theta)$ is a set of encryption functions \[\{E_K \mid K \in \mathcal K\} \subseteq \Sym(V\times V)\]  such that  for each $K \in \mathcal K$ the map $E_K$ is the composition of $r$ functions, i.e.\
$E_K =\overline{\veps_{1,K}}\:\overline{\veps_{2,K}}\ldots\overline{\veps_{r,K}}$.
The $i$-th round function 
$\overline{\veps_{i,K}}$ is defined as
\[
\overline{\veps_{i,K}} \deq \rb\,
\tb
\sigma_{({k_i}\theta,k_i)}, 
\]
where 
\begin{itemize}
\item $\rb$ denotes the formal operator $\rhoLM \in \Sym(V\times V)$;
\item $\tb$ denotes the formal operator $\begin{pmatrix}
\theta & \mbb0\\
\theta & \mbb1
\end{pmatrix} \in \Sym(V\times V)$;
\item the  key-schedule $\mathcal{K} \rightarrow V^{r}$, $K \mapsto (k_1,k_2,\ldots,k_r)$ is surjective with respect to any round.
\end{itemize}
\end{definition}
By the assumption on the key-schedule, we can always assume without loss of generality that $0\rho = 0$, provided that,
in each round, the value $(0\rho\theta,0\rho)$ is added to the round key of the previous iteration.
It is well known that, for security concerns, the function $\theta$ is required to be an orthomorphism~\cite{vaudenay1999lai}. However,
we do not make use of this hypothesis in our analysis. We strongly use, instead, the assumption that such a function is linear.\\

The general round function of a Lai-Massey cipher is displayed in Fig.~\ref{fig.rounds}.
Notice that the previous formal definition coincides with the classical
definition given by Vaudenay~\cite{vaudenay1999lai}. 
Indeed, given $(x,y) \in V\times V$ we have 
\[(x,y)\overline{\veps_{i,K}} = \big((x+(x+y)\rho+k_i)\theta, y+(x+y)\rho+k_i\big).\]

\noindent Moreover, it is easy to check that $\overline{\veps_{i,K}}$ is invertible with the following inverse
\[
\overline{\veps_{i,K}}^{\,-1}  = \tb^{-1}\rb^{-1}\sigma_{(k_i,k_i)},
\] 
where $\rb^{-1} = \rhoinvLM$ and 
$\tb^{-1} = \thetainvLM$.

\noindent Note that, as in the Feistel Network case, the inverse $\overline{\veps_{i,K}}^{\, -1}$ of the round function $\overline{\veps_{i,K}}$ of 
a Lai-Massey cipher does not involve the inverse of $\rho$. We have nonetheless assumed that $\rho$ is bijective, since in our result it is used 
as the generator of a group.

\noindent It is worth mentioning that even if IDEA was the starting point for the definition of the LM framework, it does not fit in the presentation of Definition~\ref{LMdef}, since e.g.\ in IDEA the round key is mixed to the state by  using operations different from the XOR.\\ 

\begin{figure}
\centering
\includegraphics[scale= 1]{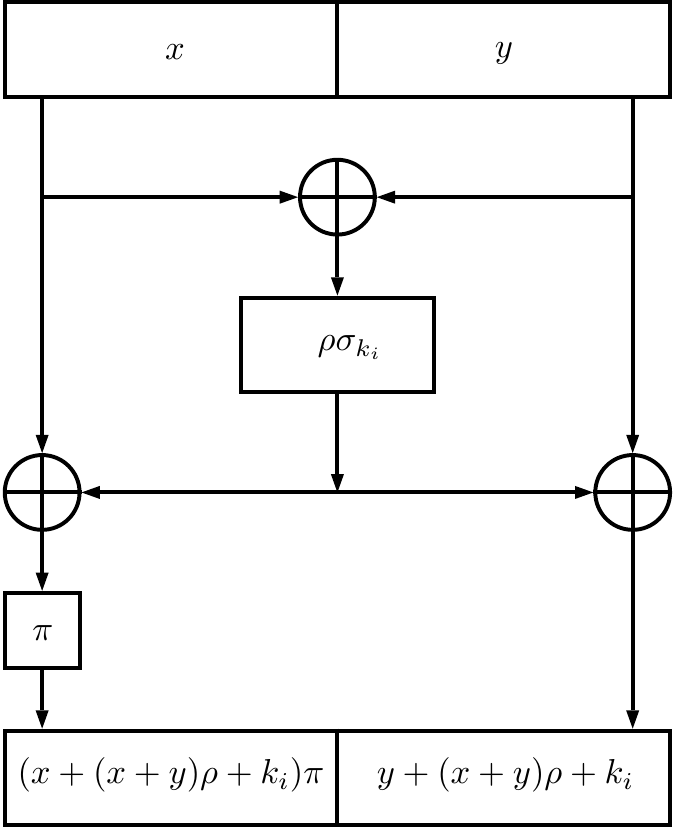}
\caption{The $i$-th round function of a Lai-Massey cipher}\label{fig.rounds}
\end{figure}

Let us define the group 
\begin{equation}\label{eqgr}
\Gamma(\mathrm{LM}(\rho,\theta)) \deq \Span{\rb, \tb, T_{2n}},
\end{equation}
which clearly contains the group $\Span{\rb\,\tb, T_{2n}}$ generated by the round functions of a Lai-Massey cipher. 
Notice that considering every possible translation in $T_{2n}$ in Eq.\eqref{eqgr} implicitly means
that the study is carried out without considering the role of the particular choice of the key-schedule.
This is however a common practice in the study of the primitivity of the groups generated by the
rounds of a cipher, except for one recent result~\cite{arafeistel}.

In the following section we will prove our main contribution, i.e.\ that $\Gamma(\mathrm{LM}(\rho,\theta))$ is primitive 
provided that $\Span{\rho, T_n}$ is primitive. It is worth mentioning again that 
$\Span{\rho, T_n}$ is the group generated by the round functions of the SPN whose  function $\rho$ is
the same that composes the building block for the round functions of the LM. In this sense, the primitivity of a 
Lai-Massey scheme is reduced to the one of the corresponding SPN.


\section{The primitivity reduction}\label{sec:pri}
\begin{definition}
A subgroup $  U \leq V\times V$ is a \emph{linear block} for $f \in \Sym(V\times V)$ if for each $(v,w) \in V\times V$ there exists $(v',w') \in V \times V$ such
that 
\begin{equation}\label{lin_block}
(U +(v,w))f = U+(v',w').
\end{equation} Notice that we can always assume $(v',w') = (v,w)f.$ 
\end{definition}
In the following result we assume the existence of a linear block $U$ for $\rb$. In this case we have
\[(v',w') = (v+w, w+(v+w)\rho) 
.\]
 Moreover, it is easy to check that 
$U$ is a linear block also for $\rb^{-1}$, from which we  obtain 
\[(U +(v,w))\rb^{-1} = U +(v+w+v\rho,w+v\rho).\]
We use Theorem~\ref{gours} and Corollary~\ref{lemma:psiforphi} to 
provide an useful decomposition of $U$. The explicit dependence of all the groups
from $\rho$  is here omitted.
\begin{lemma}\label{cond}
Let $U \leq V \times V$, and let $A,B,C,D \leq V$ and $\phi: A \to C$ an homomorphism such that
$U = \{ (a, a \varphi + d) \mid a \in A, d \in D\}$.
Let us assume that $U$ is a linear block for $\rb$.
Then the following conditions hold: 
\begin{enumerate}
\item \label{item_DinA} $D \leq A$;
\item \label{item_AphiinA} $A \phi \leq A$;
\item \label{item_DinDphi} $D \phi \leq D$.
\end{enumerate}
\end{lemma}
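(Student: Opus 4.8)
The plan is to probe the linear block $U$ with a handful of well-chosen elements, apply $\rb$ and $\rb^{-1}$ to them, and read off the three containments from the constraint that the images must again lie in $U$. What makes this cheap is the rigidity of the Goursat-type description: for $(x,y)\in V\times V$ one has $(x,y)\in U$ if and only if $x\in A$ and $y+x\varphi\in D$. Moreover, under the standing normalisation $0\rho=0$ the point $(0,0)$ is fixed by $\rb$, so — $U$ being a linear block for $\rb$ and, as recalled in the excerpt, also for $\rb^{-1}$ — both $\rb$ and $\rb^{-1}$ stabilise $U$ setwise. After these remarks the proof reduces to three one-line evaluations.

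For condition (1) I would feed the element $(0,d)\in U$ (taking $a=0$, $d\in D$) into $\rb$: by the formula $(v,w)\rb=(v+w,\,w+(v+w)\rho)$ its image is $(d,\,d+d\rho)$, which lies in $U$, so its first component $d$ belongs to $A$; hence $D\leq A$. For condition (2) I would instead feed $(a,a\varphi)\in U$ (taking $d=0$, $a\in A$) into $\rb$: the image has first component $a+a\varphi\in A$, and since $a\in A$ this gives $a\varphi\in A$, i.e.\ $A\varphi\leq A$. Both of these only see how $\rb$ acts on first components, namely $(v,w)\mapsto v+w$, so the non-linearity of $\rho$ never intervenes.

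The genuinely delicate point is condition (3), since it concerns second components, where $\rho$ does appear. Probing with $(0,d)$ under $\rb$ is not enough: the image $(d,\,d+d\rho)$ only yields $d\varphi+d\rho\in D$, which is useless without the (false in general) information $D\rho\subseteq D$. The way around this is to use $\rb^{-1}$ instead of $\rb$: from the formula $(v,w)\rb^{-1}=(v+w+v\rho,\,w+v\rho)$ one gets $(0,d)\rb^{-1}=(d,d)$, because the only place $\rho$ would occur is as $0\rho$, which vanishes. Since $(d,d)\in U$, the membership criterion gives $d\in A$ together with $d+d\varphi\in D$; as $d\in D$, in characteristic $2$ we conclude $d\varphi=d+(d+d\varphi)\in D$, so $D\varphi\leq D$. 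Thus I expect the main obstacle — the interference of the non-linear layer $\rho$ in the second coordinate — to be dissolved precisely by passing to $\rb^{-1}$, which moves the relevant evaluation of $\rho$ to the point $0$.
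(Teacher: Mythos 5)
Your proposal is correct and follows essentially the same route as the paper: conditions (1) and (2) are read off from the first coordinate of $(0,d)\rb$ and $(a,a\varphi)\rb$ respectively, and condition (3) is obtained by passing to $\rb^{-1}$ so that the evaluation of $\rho$ occurs at $0$, yielding $(0,d)\rb^{-1}=(d,d)$ and hence $d\varphi+d\in D$. The only cosmetic difference is that you phrase the conclusion via the membership criterion for $U$ rather than equating the image with a generic element $(x,x\varphi+y)$ as the paper does.
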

\begin{proof}
Since $U$ is a linear block for $\rb$, taking $u=v=0$ in Eq.~\eqref{lin_block} we have 
that for each $a \in A$ and $d \in D$ there exist $x \in A$ and $y \in D$ such that 

\begin{eqnarray*}
(a,a\phi+d)\rb &=& (a,a\phi+d)\rhoLM \\
&=& (a+a\phi+d,a\phi+d + (a+a\phi+d)\rho)\\ &=& (x,x\phi+y).
\end{eqnarray*}
 If $a=0$, then 
$x=d$ and therefore $D \leq A$, so (\ref{item_DinA}) is proved.
If $d=0$, then $x = a + a\phi$ and therefore $A\phi \leq A$, which is (\ref{item_AphiinA}).
As noticed, $U$ is also a linear block for $\rb^{-1}$, hence for each $a \in A$ and $d \in D$ there exist $x \in A$ and $y \in D$ such that
\begin{eqnarray*}
(a,a\phi+d) \rb^{-1} &=&(a,a\phi+d)\rhoinvLM\\
&=& (a+a\phi+d+a\rho, a\phi+d+a\rho)\\ &=& (x,x\phi + y).
\end{eqnarray*}
If $a=0$, then $y=d\phi+d$, and consequently $d\phi \in D$, which proves (\ref{item_DinDphi}). 
\end{proof}
We now use the previous lemma to show our main result on the primitivity of the Lai-Massey scheme. Notice that the result is valid for any choice of $\theta$.
\begin{theorem}\label{thm:main}
If $\Span{\rho, T_n}$ is primitive, then $\Gamma_{\infty}(\mathrm{LM}(\rho,\theta))$ is primitive.
\end{theorem}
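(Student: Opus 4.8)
The plan is to argue by contradiction. Assume $\Gamma(\mathrm{LM}(\rho,\theta))=\Span{\rb,\tb,T_{2n}}$, the group of~\eqref{eqgr}, is imprimitive; from this I will extract a block system for $\Span{\rho,T_n}$. Transitivity is immediate, since $T_{2n}\leq \Gamma(\mathrm{LM}(\rho,\theta))$ is transitive on $V\times V$ by Lemma~\ref{translatioBlocks}; so by Lemmas~\ref{lemma:trans} and~\ref{translatioBlocks} any block system for $\Gamma(\mathrm{LM}(\rho,\theta))$ is the family of cosets of a proper non-trivial subgroup $U<V\times V$. As we may assume $0\rho=0$, both $\rb$ and $\tb$ fix $(0,0)$, so the block through the origin is $U$ itself and, the coset partition being $\Gamma$-invariant, $U$ is a linear block for $\rb$, $\tb$ and their inverses, each such map sending the coset $U+(v,w)$ onto $U+(v,w)g$. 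Writing $U$ in Goursat form through Theorem~\ref{gours} and Corollary~\ref{lemma:psiforphi} as $U=\{(a,a\varphi+d)\mid a\in A,\ d\in D\}$, Lemma~\ref{cond} supplies $D\leq A$, $A\varphi\leq A$ and $D\varphi\leq D$.

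The heart of the proof is to promote the origin-only consequences of Lemma~\ref{cond} to genuine statements about $\rho$ on all of $V$, by applying the linear-block identity to a general coset rather than to $U$ alone. Concretely, I would apply $\rb^{-1}$ to a generic element $(a+v,a\varphi+d+w)$ of $U+(v,w)$ and, using $(x,y)\rb^{-1}=(x+y+x\rho,y+x\rho)$, compare its first coordinate with the first coordinate of a generic element of $U+(v,w)\rb^{-1}=U+(v+w+v\rho,w+v\rho)$: invoking $D\leq A$ and $A\varphi\leq A$ to absorb the $a$, $a\varphi$, $d$ terms into $A$, what is left of the ``$A$-part'' is exactly $(a+v)\rho+v\rho$, whence
\[
(a+v)\rho+v\rho\in A\qquad\text{for all }a\in A,\ v\in V.
\]
Setting $v=0$ gives $A\rho\subseteq A$, and the displayed inclusion gives $(A+v)\rho=A+v\rho$, i.e.\ $\rho$ permutes the cosets of $A$. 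A parallel computation, applying $\rb$ to the elements $(v,d+w)$ of $U+(v,w)$ and using $D\varphi\leq D$, yields $(d+s)\rho+s\rho\in D$ for all $d\in D$, $s\in V$, so $\rho$ likewise permutes the cosets of $D$.

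It remains to handle the degenerate shapes of $U$. If $A$ is proper and non-trivial, then since translations and (by the above) $\rho$ permute its cosets, those cosets form a block system for $\Span{\rho,T_n}$, contradicting primitivity. If $A=\{0\}$ then $D\leq A$ forces $U=\{(0,0)\}$, which is excluded. If $A=V$ then $D\neq V$ (otherwise $U=V\times V$); if moreover $D\neq\{0\}$, the cosets of $D$ are the forbidden block system; and if $D=\{0\}$, then $U=\{(a,a\varphi)\mid a\in V\}$ is the graph of $\varphi$, and $\rb$-invariance of $U$ forces $(a+a\varphi)\rho=a\varphi^{2}$ for every $a\in V$. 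A non-zero $a_0\in\ker(\mbb1+\varphi)$ would satisfy $a_0\varphi=a_0$, hence $0=0\rho=(a_0+a_0\varphi)\rho=a_0\varphi^{2}=a_0$, a contradiction; so $\mbb1+\varphi\in\GL(V)$ and, substituting $a=b(\mbb1+\varphi)^{-1}$, we get $b\rho=b(\mbb1+\varphi)^{-1}\varphi^{2}$, exhibiting $\rho$ as a linear map and contradicting $\rho\notin\AGL(V)$. Every case yields a contradiction, so $\Gamma(\mathrm{LM}(\rho,\theta))$ is primitive.

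The step I expect to be the real work is the coordinate bookkeeping of the second paragraph: one must keep the compared coset element fully generic in its $A$- and $D$-components and check that the $a$, $a\varphi$, $d$ contributions cancel into $A$ using precisely conditions~(\ref{item_DinA})--(\ref{item_DinDphi}), and one must also run the analogous computation with $\rb$ (and confirm that the $\tb$-identities, available because $\tb$ fixes the origin, add nothing essential). The only genuinely delicate case is $A=V$, $D=\{0\}$: this is the unique point where non-affinity of $\rho$ enters, and it is what forbids the ``graph of a linear map'' from being an invariant block. It is worth double-checking that this case, and the whole argument, really uses no property of $\theta$, in line with the theorem being claimed for every $\theta$.
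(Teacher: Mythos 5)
Your proof is correct, and while it shares the paper's overall skeleton --- reduce to $\Span{\rb,T_{2n}}$, identify the putative block system with the cosets of a subgroup $U\le V\times V$ via Lemmas~\ref{lemma:trans} and~\ref{translatioBlocks}, write $U=\{(a,a\varphi+d)\}$ by Goursat, and invoke Lemma~\ref{cond} --- the middle of your argument genuinely diverges from the paper's, to its advantage. The paper extracts only the identity $(D+s)\rho=D+s\rho$ from the generic-coset computation (your $D$-computation with $\rb$ is exactly theirs); it then splits on $D$, and in the sub-case $D=\{0\}$ it must first prove $B=\{0\}$, conclude that $\varphi$ is an automorphism of $A$, and use $\{a+a\varphi\mid a\in A\}=A$ before it can deduce that $\rho$ permutes the cosets of $A$. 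You instead obtain $(A+v)\rho=A+v\rho$ directly and unconditionally, by running the first-coordinate computation with $\rb^{-1}$ on a generic coset and absorbing $a$, $a\varphi$, $d$ into $A$ via conditions~(\ref{item_DinA}) and~(\ref{item_AphiinA}); this makes the case analysis symmetric in $A$ and $D$ and dispenses with the automorphism detour entirely. Your treatment of the remaining case $A=V$, $D=\{0\}$ is likewise leaner: injectivity of $\mbb1+\varphi$ (from $0\rho=0$ and $(a+a\varphi)\rho=a\varphi^{2}$, with $\varphi$ automatically $\F_2$-linear as a homomorphism of elementary abelian $2$-groups) already exhibits $\rho=(\mbb1+\varphi)^{-1}\varphi^{2}$ as linear, without needing $\varphi$ itself to be bijective. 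The bookkeeping checks out --- in particular the image coset $U+(v+w+v\rho,\,w+v\rho)$ under $\rb^{-1}$ and the identifications $x=d$ and $a'=a+a\varphi+d+(a+v)\rho+v\rho$ are right --- and, as you observe, no property of $\theta$ is used, in agreement with the paper.
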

\begin{proof}
It is enough to prove that $\Span{\rb, T_{2n}}$ is primitive. Let us assume that it is imprimitive, i.e.\ that there exists
a block system $\mathcal U$ for $\Span{\rb, T_{2n}}$. Then, from Lemma~\ref{translatioBlocks},  the block system 
is
 $\mathcal U = \{U + (v,w) \mid (v,w) \in V\times V\}$  for a non-trivial proper subspace $U$ of $V\times V$.
 Since $U$ is a linear block for $\rb$, we have that for each $(v,w) \in V\times V$ and for each $a \in A$ and $d \in D$ there exist
  $x \in A$ and $y \in D$ such that 
\begin{equation*}
\begin{split}
  (a+v, a\phi + d+ w)\rb  & =\\
  &  = (a+a\phi+d+v+w, (a+a\phi+d+v+w)\rho + a\phi+d+w) \\
 &= (x+v+w, x\phi+y+w+(v+w)\rho)
\end{split}
\end{equation*}
  If $a=0$, then $x=d$ and 
  \[
  y+ d\phi +d+(v+w)\rho= (d+v+w)\rho.
  \]
From Lemma~\ref{cond} we have $d\phi \in D$, and therefore, since $\rho$ is bijective, we obtain the equality
\[
(D+v+w) \rho = D+(v+w) \rho.
\]
If $D$ is a non-trivial proper subgroup of $V$, then $\{D+v \mid v \in V\}$ is a block system for $\Span{\rho, T_n}$, which proves our claim.
In order to conclude the proof, let us prove that both the assumptions $D=\{0\}$ and $D=\F_2^n$ lead to contradictions.\\
{$\mathbf{[D = \F_2^n]}$} Since $D \leq A$, then $A= \F_2^n$, and therefore $B=C = \F_2^n$, since from the hypothesis $A/B \cong C/D$. This proves that $U$ is not proper, a contradiction.\\
{$\mathbf{[D = \{0\}]}$} In this case we have $U = \{(a,a\phi) \mid a \in A\}$, and so, for each $(v,w) \in V\times V$ and for each $a \in A$ there
exists $x \in A$ such that 
\begin{eqnarray}
(a+v,a\phi+w)\rb &=& (a+a\phi+v+w, (a+a\phi+v+w)\rho + a\phi+w) \nonumber\\
 &=& (x+v+w, x\phi+w+(v+w)\rho)\nonumber,
\end{eqnarray}
then $x= a+a\phi$ and 
\begin{equation}\label{eqth}
 (a+a\phi+v+w)\rho = a\phi^2+(v+w)\rho
\end{equation}
Since $B\phi \leq D$, then $B\phi = \{0\}$ and so, if $a \in B$ from Eq.~\eqref{eqth} we obtain 
\[
 (a+v+w)\rho =(v+w)\rho,
\]
which implies $a= 0$, i.e.\ $B=\{0\}$. This proves that $\phi = \psi : A \rightarrow C$ is an isomorphism.
But $A\phi \leq A$, from Lemma~\ref{cond},  therefore  $	\phi$ is an automorphism of $A$ and,
from Eq.\eqref{eqth}, we obtain 
\[
(A+v+w)\rho = A + (v+w)\rho.
\]
In the case under consideration, i.e.\ when $D = \{0\}$, the claim is proved by showing that $\{A+v \mid v \in V\}$ is a block system. This is 
addressed in the remainder of the proof.
Let us prove that $A$ is non-trivial and proper. If $A = \{0\}$, then $C = D = \{0\}$, and so also $B = \{0\}$, therefore $U$ is trivial, a contradiction. To conclude, let us assume $A = \F_2^n$. From Eq.~\eqref{eqth}, setting $v=w=0$, we obtain that $a\phi^2=(a+a\phi)\rho$. If $a \in A$ is a fixed point of $\phi$, i.e.\ $a = a\phi$, then $(a+a\phi)\rho = 0$, and so $a\phi^2 = 0$. Therefore $a = 0$, since $\phi$ is an automorphism. We have proved that $\phi$ is fixed-point free, except for the trivial one $a=0$, from which it follows that  $\mbb1 + \phi$ is injective and, since $A \phi \leq A$, we have 
$\{a+a\phi \mid a \in A\} = A = \F_2^n$. Therefore $\rho$ is linear on $\F_2^n$, a contradiction.
\end{proof} 
We have already observed that $\Span{\rho, T_n}$ is the group $\Gamma_{\infty}(\Phi)$ generated by the rounds of the Substitution Permutation Network $\Phi$ whose  
$i$-th round function is  $\veps_{i,K} = \rho\sigma_{k_i}$ for some $\rho = \gamma\lambda \in \Sym(V)\setminus \AGL(V)$. The conditions which prove $\Gamma_{\infty}(\Phi)$  primitive
in the case of the SPNs has been extensively studied, due to the popularity of the design framework. It has been proved that the primitivity is granted when the confusion 
layer $\gamma$ of $\rho$ satisfies some well-established
conditions of non-linearity, provided that $\lambda$ provides sufficient diffusion. The interested reader 
may refer to Caranti et al.~\cite{carantiprimitive}, where the primitivity of SPNs is studied in the larger context of \emph{translation-based} ciphers.
\subsection*{Different rounds}
In our analysis we have assumed that, for sake of simplicity, in both the cases of SPNs and of LMs the 	\emph{same} round function is applied to each round,
with the only exception of the round key. It is worth being mentioned here that almost no real cipher can exactly fit this model, due to the
natural need to differentiate the encryption routine by introducing some atypical rounds, for both security and efficiency reasons (see e.g.\ the first and 
last rounds of AES~\cite{aes}). However, this does not represent an actual limitation when it comes to evaluate the security of a design 
from the point of view of the group generated by the round function. If we assume, indeed, that our target SPN and LM feature different round functions 
for each round, then the primitivity of $\Span{\rb_i, \tb_i, T_{2n}}$ can be reduced to that of $\Span{\rho_i, T}$, proceeding as in~Theorem~\ref{thm:main}.
When this is true for one round $i$, then the full group
\begin{equation}\label{eqgr2}
\Span{\Span{\rb_i, \tb_i, T_{2n}}\mid 1 \leq i \leq r}
\end{equation} is primitive.

Assuming that the definition of SPN and Definition~\ref{LMdef} allow different round functions for each round, then the following consequence of Theorem~\ref{thm:main}
can be derived.
\begin{corollary}
If a given round of an SPN generates a primitive group, then the group defined as in Eq.~\eqref{eqgr2} generated by the rounds of the corresponding LM is primitive.
\end{corollary}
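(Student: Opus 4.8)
The plan is to obtain this corollary as an immediate consequence of Theorem~\ref{thm:main} (in the ``different rounds'' form sketched above) together with Lemma~\ref{lemma:trans}. The first step is to observe that the argument proving Theorem~\ref{thm:main} is genuinely \emph{round-local}: Lemma~\ref{cond} and the subsequent case analysis on $D$ only ever involve the single permutation $\rho$ attached to one round (and, in fact, not even $\theta$, by the remark preceding the theorem). Hence, if there is a round $i$ of the SPN for which $\Span{\rho_i, T_n}$ is primitive, the same reasoning applied verbatim to $\rho_i$ shows that the group $H \deq \Span{\rb_i, \tb_i, T_{2n}}$ is primitive, where $\rb_i$ and $\tb_i$ are built from $\rho_i$ and the corresponding $\theta_i$ exactly as in Definition~\ref{LMdef}.

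Next I would record the trivial inclusions: the group of Eq.~\eqref{eqgr2}, which I denote $\Gamma$, contains both $T_{2n}$ and the subgroup $H$. By Lemma~\ref{translatioBlocks}, $T_{2n}$ is transitive on $V\times V$, so $H \leq \Gamma$ is transitive and $\Gamma$ is transitive as well.

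The key step is then a short contradiction argument. Suppose $\Gamma$ were imprimitive; since we have just seen it is transitive, it would admit a block system $\mathcal{B}$ of $V\times V$. Because $H \leq \Gamma$ is transitive, Lemma~\ref{lemma:trans} forces $\mathcal{B}$ to be a block system for $H$ too. This contradicts the primitivity of $H$ established in the first step. Therefore $\Gamma$ admits no block system and, being transitive, is primitive, as claimed.

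The only point that requires genuine care — and the main (mild) obstacle — is the first step: one must check that the proof of Theorem~\ref{thm:main} never secretly uses that the \emph{same} $\rho$ occurs in every round, i.e.\ that every occurrence of $\rho$ in Lemma~\ref{cond}, in Equation~\eqref{eqth}, and in the two boundary cases $D=\{0\}$ and $D=\F_2^n$ refers to the $\rho$ of the one round under consideration. Once this bookkeeping is granted, the rest of the proof is the one-line application of Lemma~\ref{lemma:trans} above.
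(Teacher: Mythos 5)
Your proposal is correct and matches the paper's own (implicit) argument: the paper likewise notes that the proof of Theorem~\ref{thm:main} applies verbatim to a single round $i$ to give primitivity of $\Span{\rb_i,\tb_i,T_{2n}}$, and then concludes primitivity of the full group of Eq.~\eqref{eqgr2} exactly via the fact that a block system for the overgroup would restrict to one for this transitive primitive subgroup (Lemma~\ref{lemma:trans}). Your explicit bookkeeping of round-locality and transitivity via $T_{2n}$ is sound and adds nothing beyond what the paper intends.
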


\subsection*{2-transitivity}
It is well known that every 2-transitive group is primitive. It may be natural to ask whether an alternative version of Theorem~\ref{thm:main}
could be obtained, where primitivity is replaced by 2-transitivity. In the case $n=3$, we have exhaustively searched using \texttt{Magma}~\cite{magma} for  all the non-linear functions $\rho$
such that $\Span{\rho, T_3}$ is a 2-transitive group.  For all of those, $\Span{\rb, T_6}$ is always 2-transitive. Setting $n=4$, a partial search in the space led to no counterexamples.
A brute-force search when $n\geq 4$ is out of the scopes of this work since requires code optimization and a faster programming language.
On the other side, it is well known that the $2$-transitivity of $\Gamma_{\infty}(\mathrm{LM}(\rho,\theta))$ is equivalent to the transitivity of the stabilizer of $(0,0)$
 on $V \times V\setminus\{(0,0)\}$. However, a description of $\Gamma_{\infty}(\mathrm{LM}(\rho,\theta))_{(0,0)}$ is not 
easily obtained from $\Span{\rho, T_n}_0$ due to the non-linear dependence introduced by the Lai-Massey formal operator.
For this reasons, at the time of writing we are not able to conjecture that the 2-transitivity of $\Span{\rho, T_n}$ implies
that $\Gamma_{\infty}(\mathrm{LM}(\rho,\theta))$ is 2-transitive, and we leave this as an open problem.

\section*{Acknowledgment}
The authors are thankful to Massimiliano Sala and Ralph Wernsdorf for some useful discussions and suggestions.
\bibliographystyle{amsalpha}
\bibliography{sym2n_ref}

\newcommand{\etalchar}[1]{$^{#1}$}
\providecommand{\bysame}{\leavevmode\hbox to3em{\hrulefill}\thinspace}
\providecommand{\MR}{\relax\ifhmode\unskip\space\fi MR }
\providecommand{\MRhref}[2]{%
  \href{http://www.ams.org/mathscinet-getitem?mr=#1}{#2}
}
\providecommand{\href}[2]{#2}
\begin{thebibliography}{ACC{\etalchar{+}}19}

\bibitem[ACC{\etalchar{+}}19]{aragonawave}
Riccardo Aragona, Marco Calderini, Roberto Civino, Massimiliano Sala, and
  Ilaria Zappatore, \emph{Wave-shaped round functions and primitive groups},
  Adv. Math. Commun. \textbf{13} (2019), no.~1, 67--88.

\bibitem[ACC20]{arafeistel}
Riccardo Aragona, Marco Calderini, and Roberto Civino, \emph{Some
  group-theoretical results on {F}eistel networks in a long-key scenario}, Adv.
  Math. Commun. \textbf{14} (2020), no.~4, 727--743.

\bibitem[ACS17]{aragona2017group}
Riccardo Aragona, Aragona Caranti, and Massimiliano Sala, \emph{The group
  generated by the round functions of a {GOST}-like cipher}, Ann. Mat. Pura
  Appl. (4) \textbf{196} (2017), no.~1, 1--17.

\bibitem[ACTT18]{aragona2018primitivity}
Riccardo Aragona, Marco Calderini, Antonio Tortora, and Maria Tota,
  \emph{Primitivity of {PRESENT} and other lightweight ciphers}, J. Algebra
  Appl. \textbf{17} (2018), no.~6, 1850115, 16.

\bibitem[Cam99]{cameronpermutation}
Peter~J. Cameron, \emph{Permutation groups}, London Mathematical Society
  Student Texts, vol.~45, Cambridge University Press, Cambridge, 1999.

\bibitem[CCS17]{calderini2017translation}
Marco Calderini, Roberto Civino, and Massimiliano Sala, \emph{Translation
  groups of the affine general linear group over prime fields}, arXiv preprint
  arXiv:1702.00581 (2017).

\bibitem[CDVS09]{carantiprimitive}
Andrea Caranti, Francesca Dalla~Volta, and Massimiliano Sala, \emph{On some
  block ciphers and imprimitive groups}, Appl. Algebra Engrg. Comm. Comput.
  \textbf{20} (2009), no.~5-6, 339--350.

\bibitem[CG75]{copgross}
Don Coppersmith and Edna Grossman, \emph{Generators for certain alternating
  groups with applications to cryptography}, SIAM J. Appl. Math. \textbf{29}
  (1975), no.~4, 624--627.

\bibitem[CP96]{magma}
John Cannon and Catherine Playoust, \emph{M{AGMA}: a new computer algebra
  system}, Euromath Bull. \textbf{2} (1996), no.~1, 113--144.

\bibitem[DR02]{aes}
Joan Daemen and Vincent Rijmen, \emph{The design of {R}ijndael}, Information
  Security and Cryptography, Springer-Verlag, Berlin, 2002, {A}{E}{S} --- the
  {A}dvanced {E}ncryption {S}tandard.

\bibitem[Fei73]{feistel}
Horst Feistel, \emph{{C}ryptography and {C}omputer {P}rivacy}, Scientific
  American \textbf{228} (1973), no.~5, 15--23.

\bibitem[GJ14]{impossibledifferentialLM}
Rui Guo and Chenhui Jin, \emph{Impossible differential cryptanalysis on
  {L}ai-{M}assey scheme}, ETRI Journal \textbf{36} (2014), no.~6, 1032--1040.

\bibitem[Gou89]{goursat}
Edouard Goursat, \emph{Sur les substitutions orthogonales et les divisions
  r\'{e}guli\`eres de l'espace}, Ann. Sci. \'{E}cole Norm. Sup. (3) \textbf{6}
  (1889), 9--102.

\bibitem[KRS88]{kaliski}
Burton~S. Kaliski, Jr., Ronald~L. Rivest, and Alan~T. Sherman, \emph{Is the
  {D}ata {E}ncryption {S}tandard a group? ({R}esults of cycling experiments on
  {DES})}, J. Cryptology \textbf{1} (1988), no.~1, 3--36.

\bibitem[LLH15]{randomlaimassey}
Yiyuan Luo, Xuejia Lai, and Jing Hu, \emph{The pseudorandomness of many-round
  {L}ai-{M}assey scheme}, JISE J. Inf. Sci. Eng. \textbf{31} (2015), no.~3,
  1085--1096.

\bibitem[LLZ17]{genericattackLM}
Yiyuan Luo, Xuejia Lai, and Yujie Zhou, \emph{Generic attacks on the
  {L}ai-{M}assey scheme}, Des. Codes Cryptogr. \textbf{83} (2017), no.~2,
  407--423.

\bibitem[LM91]{lai1990proposal}
Xuejia Lai and James~L. Massey, \emph{A proposal for a new block encryption
  standard}, Advances in cryptology---{EUROCRYPT} '90 ({A}arhus, 1990), Lecture
  Notes in Comput. Sci., vol. 473, Springer, Berlin, 1991, pp.~389--404.

\bibitem[Pat99]{paterson1999imprimitive}
Kenneth~G. Paterson, \emph{{I}mprimitive {P}ermutation {G}roups and {T}rapdoors
  in {I}terated {B}lock {C}iphers}, Fast Software Encryption (Berlin,
  Heidelberg), Springer Berlin Heidelberg, 1999, pp.~201--214.

\bibitem[Sha49]{shannon}
Claude~E. Shannon, \emph{Communication theory of secrecy systems}, Bell System
  Tech. J. \textbf{28} (1949), 656--715.

\bibitem[SW08]{sparwenrij}
R\"{u}diger Sparr and Ralph Wernsdorf, \emph{Group theoretic properties of
  {R}ijndael-like ciphers}, Discrete Appl. Math. \textbf{156} (2008), no.~16,
  3139--3149.

\bibitem[SW15]{wenkas}
\bysame, \emph{The round functions of {KASUMI} generate the alternating group},
  J. Math. Cryptol. \textbf{9} (2015), no.~1, 23--32.

\bibitem[Vau99]{vaudenay1999lai}
Serge Vaudenay, \emph{On the {L}ai-{M}assey scheme}, Advances in
  cryptology---{ASIACRYPT}'99 ({S}ingapore), Lecture Notes in Comput. Sci.,
  vol. 1716, Springer, Berlin, 1999, pp.~8--19.

\bibitem[Wer01]{wenidea}
Ralph Wernsdorf, \emph{{IDEA}, {SAFER}++ and {T}heir {P}ermutation {G}roups,},
  Second Open NESSIE Workshop, Royal Holloway University of London, Egham,
  2001.

\bibitem[YPL11]{quasifeistel}
Aaram Yun, Je~Hong Park, and Jooyoung Lee, \emph{On {L}ai-{M}assey and
  quasi-{F}eistel ciphers}, Des. Codes Cryptogr. \textbf{58} (2011), no.~1,
  45--72.

\end{thebibliography}

\end{document}